\documentclass[conference]{IEEEtran}
\usepackage{amsthm}
\usepackage{multirow}
\usepackage{balance}
\usepackage{color}
\usepackage{cite}
\ifCLASSINFOpdf
   \usepackage[pdftex]{graphicx}
\graphicspath{ {figures/} }
\else
\fi
\usepackage[font=footnotesize,caption=false,subrefformat=parens,labelformat=parens]{subfig}
\usepackage{stfloats}
\usepackage{amsmath}
\usepackage{multirow}
\usepackage{pifont}
\newcommand{\cmark}{\ding{51}}%
\newcommand{\xmark}{\ding{55}}%
\usepackage{xcolor}
\usepackage{algorithmic}
\usepackage{algorithm}
\usepackage{algorithmic}
\usepackage{cite}
\usepackage{times,amsmath}
\DeclareMathOperator*{\argmax}{arg\,max}

\usepackage{color}
\usepackage{algorithm}
\usepackage{algorithmic}
\usepackage{url}
\usepackage{wrapfig,lipsum,booktabs}
\usepackage{lipsum}
\usepackage{algorithmic}
\usepackage{graphicx}
\usepackage{array}
\usepackage{url}
\newtheorem{prop}{Proposition}
\newtheorem{cor}{Corollary}
\hyphenation{op-tical net-works semi-conduc-tor}
\setlength{\abovecaptionskip}{4pt}
\setlength{\belowcaptionskip}{4pt}
\setlength{\textfloatsep}{8pt}

	\newcommand\blfootnote[1]{%
		\begingroup
		\renewcommand\thefootnote{}\footnote{#1}%
		\addtocounter{footnote}{-1}%
		\endgroup
	}
\begin{document}
\title{RSS-Based Q-Learning for Indoor UAV Navigation
}
\author{\IEEEauthorblockN{Md Moin Uddin Chowdhury, Fatih Erden, and Ismail Guvenc}
Dept. Electrical and Computer Engineering, NC  State University, Raleigh, NC, 27606\\
Email: \{mchowdh, ferden, iguvenc\}@ncsu.edu
}


\maketitle
\begin{abstract}
In this paper, we focus on the potential use of unmanned aerial vehicles (UAVs) for search and rescue (SAR) missions in GPS-denied indoor environments. We consider the problem of navigating a UAV to a wireless signal source, e.g., a smartphone or watch owned by a victim. We assume that the source periodically transmits RF signals to nearby wireless access points. Received signal strength (RSS) at the UAV, which is a function of the UAV and source positions, is fed to a Q-learning algorithm and the UAV is navigated to the vicinity of the source. Unlike the traditional location-based Q-learning approach that uses the GPS coordinates of the agent, our method uses the RSS to define the states and rewards of the algorithm. It does not require any a priori information about the environment. These, in turn, make it possible to use the UAVs in indoor SAR operations. Two indoor scenarios with different dimensions are created using a ray tracing software. Then, the corresponding heat maps that show the RSS at each possible UAV location are extracted for more realistic analysis. Performance of the RSS-based Q-learning algorithm is compared with the baseline (location-based) Q-learning algorithm in terms of convergence speed, average number of steps per episode, and the total length of the final trajectory. Our results show that the RSS-based Q-learning provides competitive performance with the location-based Q-learning.
\looseness=-1
\end{abstract}
\begin{IEEEkeywords}
Drone, Q-learning, ray tracing, RSS, unmanned aerial vehicles (UAVs), UAV navigation.
\end{IEEEkeywords}
\section{Introduction}
\label{sec:Intro}
\blfootnote{This work is supported by NSF under the award CNS-1453678. The authors would like to thank Bekir S. Ciftler and Adem Tuncer for their initial inputs.}
Thanks to the extensive studies and massive cost reduction in manufacturing, the interest in the use of unmanned aerial vehicles (UAVs) is expected to increase significantly in the upcoming years. Besides their widespread recreational and military use, UAVs have already started to show up in civilian applications including but not limited to precision agriculture, infrastructure health monitoring, packet delivery, restoring service after natural disasters, patrolling missions, and search and rescue (SAR) operations~\cite{Martins2019,acuna}.

Deployment of UAVs can make a big difference in SAR missions by providing information and data about the environment or an injured or lost person, improving network access, delivering first aid equipment, among others. UAVs can be utilized by emergency services or rescue teams in the aftermath of a disaster (e.g., a hurricane or earthquake) and can help the first responders make better decisions and save time. However, due to the unavailability of a suitable data link or precise maneuver requirements that are sometimes outside human capabilities, human control over the UAVs may not be possible~\cite{Becerra}. Thus, it is critical to develop effective technologies and algorithms to enable the UAVs to perform complicated tasks autonomously. One issue with the autonomous use of UAVs in SAR missions is that, most of the time, the prior knowledge regarding the environment is limited, if not completely unavailable. Moreover, the environment may change with time or the models defining the target and its location may not be accurate or descriptive enough. Therefore, a UAV is required to interact with the environment, learn and make decisions by itself. Reinforcement learning (RL), which is a class of machine learning (ML) algorithms, may help to overcome these issues.
\begin{figure}[t]
\centering{\includegraphics[width=0.92\columnwidth]{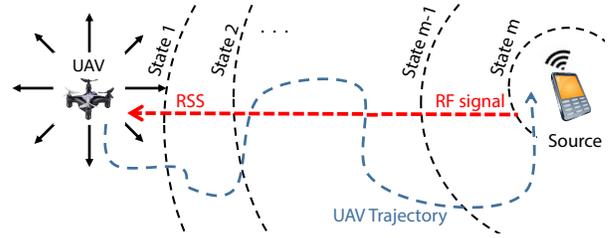}}
    \caption{Navigation of a UAV to a wireless source.}
    \label{fig:System}
\end{figure}
In RL, an agent learns in an interactive environment by using feedback from its actions and experiences. Usually, the environment is modeled as a Markov decision process (MDP) to leverage the dynamic programming technique that is used by the RL algorithms. The studies that do not make use of the ML either use exact models of the environment or assume the accurate information of the environment is predictable\cite{gesbert2018}. On the contrary, a branch of RL, known as Q-learning, requires a little or no prior/explicit knowledge of the environment. Q-learning is an off-policy RL algorithm which aims to find the best action to take given the current state. It learns from actions that are not known to the current policy by taking random actions and seeks to learn a policy that maximizes the total reward.

RL algorithms have already been widely studied in UAV-related researches as in many other fields of robotics. In~\cite{Iman2016}, a model-based RL algorithm, TEXPLORE, is used for the autonomous navigation of UAVs. The value function is updated from a model of the environment, while also taking battery life into consideration. It is shown that their method learns faster than the traditional table-based Q-learning due to its parallel architecture. Pham \textit{et al}.~\cite{Pham2018} use Q-learning to navigate the UAVs by defining states based on the UAV location. It is assumed that the UAV can observe its state at any position.
In~\cite{Wang2017}, GPS signal and sensory information of the local environment are used in deep RL for UAV navigation tasks in outdoor environments.  In~\cite{Rodriguez2019}, deep Q-learning is used for the autonomous landing of UAVs on a moving platform. In \cite{acuna}, RF signals from devices are used to estimate users location using random-forest based ML technique. 

There are recent promising attempts of navigating UAVs in GPS-denied indoor environments using image processing based techniques. In~\cite{KimC15a}, images from a single camera are input to a convolutional neural network (ConvNet) to learn a control strategy to find a specific target. In~\cite{Gandhi2017}, monocular images are used in a deep neural network to navigate a UAV while avoiding crashes. Negative flying data created from real collisions are used during training along with the positive data, and  all training is done offline. In~\cite{DBLP:journals/corr/SadeghiL16}, RGB images are fed to a deep ConvNet based learning method to enable UAVs to have collision-free indoor flights, again with offline training. 


\begin{figure}[t]
\vspace{-4mm}
\centering{\includegraphics[width=0.67\linewidth]{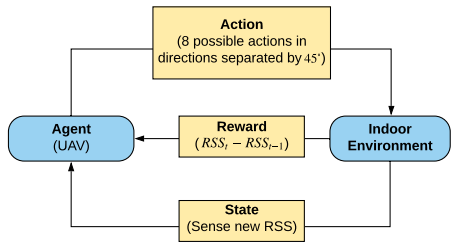}}\vspace{-5mm}
    \caption{Agent-environment interaction in the (RSS-based) Q-learning.}
    \label{RL_algo}
    \vspace{-2mm}
\end{figure}

Motivated by the above discussion, in this paper, we propose a new method for autonomous navigation of UAVs indoors using Q-learning. Smart devices (e.g., a smartphone) can be used to locate a victim in a SAR scenario through the propagated RF signals~\cite{iotpublicsafety}. Presently,  smart devices can continuously transmit RF signals to discover nearby APs. Furthermore, a smart device can be forced to transmit wireless signals in case of emergency~\cite{wang,Pu2019}. Based on this fact, unlike the location-based Q-learning, our approach uses RSS values instead of UAV location information while deciding future actions to navigate the UAV towards the target. It does not require any prior knowledge of the environment. There is also no need for an exact mathematical representation of the target or mapping of the environment to locate the target. 

A high-level view of the system architecture is shown in Fig.~\ref{fig:System}. The receiver mounted on a UAV continuously senses the environment and picks up the RF signals from a remote wireless transmitter referred to as the source. A unique state label is assigned to the RSS value at the current position. Rewards in the Q-learning algorithm are also defined as a function of successive RSS values sensed in the current and previous positions, and Q-table is updated accordingly. Finally, the UAV takes one of the possible eight actions in different directions separated by $45^{\circ}$. The proposed RSS-based Q-learning is tested in two different indoor environments. The environments and corresponding heat maps showing the RSS values for each possible UAV location are generated in a ray tracing software for a more realistic evaluation. The proposed method is compared with the baseline (i.e., location-based) Q-learning algorithm for different UAV speeds in terms of convergence speed, the number of steps taken to reach the victim in the final route, and averaged number of steps per episode.\looseness=-1

\begin{table}[t]
\vspace{-1.5mm}
\renewcommand{\arraystretch}{1.2}
\centering
\caption{Literature Review}
\resizebox{0.49\textwidth}{!}{
\begin{tabular}{p{0.6cm} p{1.3cm} p{1.6cm} p{0.95cm} p{0.55cm} p{1.6cm}} \hline
Ref. & {Input} & Method & Use & Target & Goal \\ \hline
\cite{acuna} & RF signal \& GPS & Random Forest & Outdoor & \cmark & Location estimation \\ \hline 
\cite{Wang2017} & RF signal \& GPS & Deep RL & Outdoor & \cmark & Rescue lost victims \\ \hline
\cite{Iman2016} & GPS \& Sensors & RL & Indoor/ Outdoor & \cmark & Autonomous navigation in an unknown environment \\ \hline
\cite{Pham2018} & GPS \& Sensors & RL & Outdoor & \cmark & Localize immobile victim\\  \hline
\cite{krypto} & RF signal \& GPS & Predefined searching path & Outdoor & \cmark & Rescue victim \\ \hline
\cite{wang} & GPS \& Sensors & Deep RL & Outdoor & \xmark & UAV navigation \\ \hline
\cite{Rodriguez2019} &  GPS \& Sensors & Deep RL & Outdoor & \xmark & Autonomous landing on~a~moving platform \\ \hline 
\cite{KimC15a} & Camera image & ConvNet & Indoor & \cmark & Indoor UAV navigation \\ \hline
\cite{DBLP:journals/corr/SadeghiL16} & 3D CAD models & Image processing \& deep learning & Indoor & \xmark & Collision-free indoor UAV navigation \\ \hline
\cite{Gandhi2017} & Camera image & Image processing \& deep learning & Indoor & \xmark & Collision-free indoor UAV navigation \\ \hline
\cite{gesbert2018} & GPS & RL & Outdoor & \cmark & Maximize sum-rate\\ \hline
\cite{Ferrari2012} & Infrared sensor & RL & Outdoor & \cmark & Detect targets\\ \hline
Our method & RF signal & RL & Indoor/ Outdoor & \cmark & Localize fixed victim\\ \hline
\end{tabular}}
\label{survey}
\end{table}

The remainder of this paper is organized as follows. Section~\ref{sec:Fundamentals} briefly describes the Q-learning algorithm. Simulation setup is introduced in Section~\ref{sec:Setup}. The RSS-based Q-learning algorithm for indoor navigation of UAVs is elaborated in Section~\ref{sec:UAV_navigation}. Experimental results are presented in Section~\ref{sec:Experiments}. Finally, Section~\ref{sec:Conc} concludes this paper.
\section{Background on Q-Learning}
\label{sec:Fundamentals}
As mentioned in Section~\ref{sec:Intro}, RL is a branch of ML that addresses problems where there is no explicit training data available. Q-learning, proposed by Watkins~\cite{watkins}, can be used to learn optimal policies in finite MDPs~\cite{dqn}. This traditional table-based Q-learning maximizes the expected value of the total reward over any and all successive steps by taking action in the current state and follows an optimal policy afterwards. It learns by interacting with the environment 
and approximates a value function of each state-action pair through a number of iterations. The goal is to select the action which has the maximum $Q$-value using the following update rule at~each~iteration:
\begin{equation}
{Q(s,a)\leftarrow(1-\alpha)Q(s,a)+\alpha\big[r(s)+\gamma\max\limits_{a'}Q(s',a')\big]},
\label{eq:q_fn}
\end{equation}
where $s'$ is the state reached from state $s$ after taking action $a$, $\alpha \in (0,1]$ is the learning rate, $r(s)$ is the reward attained for the current state $s$, and $\gamma$ is the discount factor which determines the importance of future rewards. The Q-learning loop is illustrated in Fig.~\ref{RL_algo}. Note that a high $\gamma$ sets priority towards distant future rewards whereas a lower one will force the agent to consider only immediate rewards. After updating the Q-table, the best policy can be obtained by acting greedily in every state by 
\begin{equation}
\label{eq:Update_eqn}
\pi^*=\argmax\limits_\mathbf{a} Q(\mathbf{s},\mathbf{a}).
\end{equation}

\section{Simulation Environment Setup}
\label{sec:Setup}
In this section, we describe the simulation environment for testing the proposed method. We use 
Wireless InSite ray tracing tool, which can provide a deterministic way of characterizing the RSS in indoor scenarios. 
First, we generate two arbitrary floor plans with different complexities and of size $26\times96$~m$^2$ and $76\times58$~m$^2$ (hereinafter referred to as the Scenario 1 and 2, respectively) using the \textit{floorplan} feature of the software. The two floor plans are shown in Fig.~\ref{fig:indoor}. The height is considered to be 3~m for all the walls. We set the UE at an elevation of 1.5~m from the ground.

After generating the floor plans, we run the ray tracing simulations to obtain the RSS at each RX grid with the source being at a specified position. The UE  
\begin{wrapfigure}{r}{5cm}
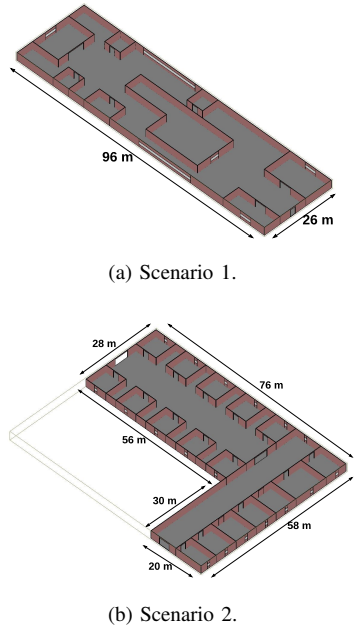

     \centering \vspace{-8mm}
     \subfloat[Scenario 1.]
         {\includegraphics[width=1\linewidth]{simple.pdf}
         }
         \vspace{-2pt}
     \vfill
     \subfloat[Scenario 2.]
         {\includegraphics[width=1\linewidth]{complex.pdf}
         }
        \caption{Two indoor environments of size (a) 26~m$\times$96~m and (b) 76~m$\times$58~m. 
        Wall heights are 3 m for both the scenarios.}
        \label{fig:indoor}
        \vspace{-2mm}
\end{wrapfigure}
is assumed to transmit RF signals using 25 dBm transmit power at 2.4 GHz. The RX grids are set $1$~m apart from each other using XY grid option in the software. Half-wave dipole antennas with vertical orientation are used at each RX grid. The maximum antenna gains are considered as 0 dB for both the RXs and UE TX. The height of the grids is considered as 2 m to avoid crashing into obstacles such as tables, cubicles, chairs, etc. The height of the doors is considered to be 2.70 m. Other settings considered in the simulations are as follows. Diffuse scattering mode is disabled.  A maximum of six reflections and one diffraction were allowed. \looseness=-1

We also create the same floor plans in MATLAB.  Then, we transfer the resulting RSS maps from the ray tracing to MATLAB for use in the navigation simulations. For simplicity, we assume that the UAV flies at a constant altitude. Thus, all the allowable actions that are separated by $45^{\circ}$ lie in the $xy$ plane. We consider three different UAV speeds, namely, 1 m/s, 2 m/s and 4 m/s. Most commercial drones available in the market come with a maximum speed limit of 40 mph or 18 m/s \cite{dronespeed}. So, our assumptions about the UAV speeds are reasonable.  
Note that, for simulation purposes, floor areas are partitioned into the grids and hence the UAV is forced to move from the center of one grid to that of the other. That is, if the UAV speed is set to $v$ m/s and the UAV makes a diagonal movement, e.g., moves from the grid index (1,1) to (2,2), its speed will be $v\sqrt{2}$ m/s. For simplicity, while presenting the results in Section~\ref{sec:Experiments}, we will refer to the UAV speed as $v$ m/s independent of the movement direction. We also assume that the UAV senses the RSS intermittently with a 1-second interval. In other words, the UAV will detect the RSS only when it reaches a new location. Such a sensing method will help the UAV save battery power.
\section{UAV Navigation Using RSS-Based Q-Learning}
\label{sec:UAV_navigation}
In this section, we introduce the RSS-based Q-learning method for the navigation of a UAV to a wireless source. In the location-based Q-learning algorithm, states and rewards are defined based on the location of the agent, i.e., GPS coordinates. This method is not suitable for use indoors where the GPS signal is not available. It also requires the exact coordinates (or an accurate mathematical representation of the position) of the target which is also not available in most of the SAR scenarios. On the other hand, in our proposed approach, states are defined based on the RSS values at each particular grid or UAV location. RSS values are also used in the definition of rewards allowing the navigation of the UAV towards the target by providing a reasonable representation of the target location. 
\subsection{State and Reward Definitions}
The UAV starts from an initial position and detects the RSS at that position. A state label is assigned to this particular RSS value. Based on the fact that no two grids (separated by 1 m in this case) will have the same RSS value, each location is represented uniquely by a state. Then, the UAV takes an action depending on the strategy of the algorithm in use and moves to a new location. The reward is defined as the difference between the RSS values associated with the latest and the previous position, i.e., $\textrm{RSS}_t-\textrm{RSS}_{t-1}$, so that higher rewards are obtained when there is an increase in the RSS. Next, a state label is assigned to the new location based on the new RSS value, and the Q-table is updated using the update equation in~\eqref{eq:q_fn}. \looseness=-1 

It is worth noting that there may be small deviations from the previous RSS values at the next visits to the same grid. These deviations may be due to the imprecise steps taken by the UAV or some small changes in the environment or the source position. Since the states are defined based on the RSS values, this situation may lead to representing a single grid by multiple states, which, in turn, delays the convergence of the algorithm. As a solution to this problem, states can be defined as the neighborhood of the detected RSS values. If the RSS value of a new location does not lie in an already defined interval, then a new state is defined; otherwise, the same state (as one of the previous states) is attained. That is, if a state is labeled as $s_i$ for the RSS value detected at time $t$, then the same state will be attained whenever a new RSS value is detected within the range $(\textrm{RSS}_t-Th,\textrm{RSS}_t+Th)$. The threshold $Th$ should be defined in such a way that the state will remain unchanged provided that the UAV hovers inside the boundaries of a grid.

Alternatively, states can be defined based on a set of RSS intervals determined before running the algorithm. A sufficiently wide range of RSS values can be divided into a number of discrete segments, and the states are assigned based on which segment the RSS at a particular location falls into. This technique may result in a small number of states, but it creates another interesting problem. For instance, two or more different locations in the indoor environments can be of the same state due to having close RSS values. Hence, a \textit{good} action at one location can be a \textit{bad} action at another one leading the UAV to crash. Consequently, instability may be observed in the Q-table update process. 
For simplicity, we assume a static environment and use the special case of the above-mentioned solution with $Th=0$, i.e., each RSS value detected at a location is given a single state label. 

Each episode ends when the UAV is close enough to the target. We assume an episode ends when the distance between the UAV and the victim is less than 2 m. Using free-space path loss model \cite{fspl}, we calculate this RSS threshold to be -21 dBm. Note that, if the distance between the UAV and victim is less than 2 m and there is a wall between them, the RSS value pertinent to that position will be far less than -21 dBm due to the presence of the wall.



Collisions are major problems for autonomous UAV navigation, and can be avoided using a range sensor or video camera-based systems as suggested in~\cite{KimC15a,Gandhi2017}. We do not address this problem in this study. However, to simulate the possible solutions, each time before the agent takes a new action, we check if that action leads to a crash. If so, the action is dropped from the list of possible actions, and another action is picked. The overall Q-learning process is summarized in Algorithm 1.

\subsection{$\epsilon$-greedy Method}
To overcome the exploration-exploitation dilemma in Q-learning, we deploy $\epsilon$-greedy method. 
The main idea of $\epsilon$-greedy method is to choose a random number from [0,1] and check whether it is greater than $\epsilon$. If it is lower than $\epsilon$, the agent takes random action; otherwise, it goes with the “greedy” action that has the highest $Q$-value. It is shown in \cite{dqn}, that 
\begin{wraptable}{r}{4.6cm}
\centering \small 
\renewcommand{\arraystretch}{1.2}
\vspace{-5mm}
\caption {Simulation parameters.}
\label{Tab:Sim_par}
{\begin{tabular}{lc}
\hline
Parameter & Value \\
\hline
$\epsilon_{\rm max}$, $\epsilon_{\rm min}$ & 1, 0.01  \\
$\alpha_{\rm max}$, $\alpha_{\rm min}$ & 0.5, 0.05  \\
$\eta$ & $10^{-5}$ \\
UE Transmit power & 25 dBm \\ 
$\gamma$ &  0.98\\
\hline
\end{tabular}}
\vspace{-2.5mm}
\end{wraptable}
starting with a high $\epsilon$ and then decreasing it with episodes can provide better convergence performance. Hence, we also start with $\epsilon=1$ and decrease it exponentially with a decay factor $\eta$ with iteration number. To increase the importance of the future rewards, we set discount factor $\gamma$ to be 0.98. The learning model parameters used in this study are specified in Table~\ref{Tab:Sim_par}. In each iteration, the agent or UAV in our case, starts from an initial location and traverses through the indoor scenario.
If the UAV detects the UE, it will get a reward of 1000. Once the UAV finds the target, the current episode finishes and the new one starts. Since the UAV becomes more experienced as it moves through the indoor environments, we also decay $\alpha$ exponentially with $\eta$. 

\begin{algorithm}[t]\small 
	\caption{RSS-based Q-learning for indoor UAV navigation.}
    \label{alg:Alg2}
	\begin{algorithmic}[1]
		\STATE  start from an initial location and obtain associated\\ state of that particular location by sensing the RSS
		\STATE \textbf{repeat} (for each step):
		\STATE \hspace{0.3cm} \textbf{if} $\epsilon$ $\geq$ $\epsilon_{\rm min}$
		\STATE \hspace{0.6cm} $\epsilon$=$\epsilon$ $\times$ $\exp{(-\eta)}$ \textbf{end if}\\
		\STATE \hspace{0.3cm} \textbf{if} $\alpha$ $\geq$ $\alpha_{\rm min}$
		\STATE \hspace{0.6cm} $\alpha$=$\alpha$ $\times$ $\exp{(-\eta)}$ \textbf{end if}\\
		\STATE \hspace{0.3cm} choose $a$ using $\epsilon$-greedy policy
		\STATE \hspace{0.3cm} take action $a$, observe $s'$
		\STATE \hspace{0.3cm} check $s'$ for possible obstacle(s)
		\STATE \hspace{0.3cm} \textbf{while} any obstacle at $s'$ \textbf{do}
		\STATE \hspace{0.6cm} leave $a$ and select any other action randomly, \textbf{end while}
		\STATE \hspace{0.3cm} Calculate reward for taking action $a$ by  subtracting RSS\\\hspace{0.3cm}  associated with state $s$ from state $s'$
		\STATE \hspace{0.3cm} update $Q$-value using \eqref{eq:q_fn}
        
		\STATE \hspace{0.3cm} $s \leftarrow s'$
		\STATE \textbf{until} $s$ is terminal
	\end{algorithmic}
\end{algorithm}

\begin{prop}
RSS-based Q-learning algorithm is an MDP.
\end{prop}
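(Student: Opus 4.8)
The plan is to verify the defining ingredients of a finite MDP --- a state space, an action space, a Markovian transition kernel, and a reward function --- for the algorithm of Section~\ref{sec:UAV_navigation}, with the Markov property being the only nontrivial requirement. The key structural fact I would exploit is the bijection between physical grid locations and RSS-based state labels: under the static-environment assumption with $Th=0$ adopted at the end of Section~\ref{sec:UAV_navigation}, each detected RSS value receives its own label, and since no two grids (separated by 1~m) share the same RSS value, the map $\phi$ sending a grid location to its state label is one-to-one and onto the set of realized states. With this in hand I would fix the components explicitly: the state space $S$ is the finite set of RSS labels, the action space $A$ consists of the eight directions separated by $45^\circ$, and the terminal states are those at which the episode-ending condition (RSS above the $-21$~dBm threshold) holds.

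Next I would establish the Markov property. Because $\phi$ is a bijection, knowing the current state $s$ is equivalent to knowing the current location. In a static environment the outcome of applying action $a$ at a given location --- including the obstacle-avoidance redirection in lines~9--11 of Algorithm~\ref{alg:Alg2} --- is determined by that location and $a$ alone, independent of how the UAV arrived there. Pushing this forward through $\phi$ gives
\begin{equation}
P(s_{t+1}\mid s_t,a_t,s_{t-1},a_{t-1},\dots,s_0)=P(s_{t+1}\mid s_t,a_t),
\nonumber
\end{equation}
which is exactly the required memorylessness of the transition kernel.

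Finally I would check that the reward is a legitimate MDP reward, and this is where I expect the main obstacle. Although $r=\textrm{RSS}_t-\textrm{RSS}_{t-1}$ superficially involves two time instants and thus looks history-dependent, the bijection $\phi$ makes each RSS value a deterministic function of its state, so the reward can be rewritten as $r=\textrm{RSS}(s')-\textrm{RSS}(s)$, a function of the current--next state pair alone. This is precisely the $r(s,a,s')$ form admitted by an MDP, and absorbing $\textrm{RSS}_{t-1}$ into the current state in this way is the crux of the argument. Once it is settled, all four MDP axioms hold and the claim follows.
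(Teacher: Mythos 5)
Your proposal is correct, but it argues along a genuinely different axis than the paper. The paper's proof is a checklist against the five-component definition of an MDP cited from the references: it asserts that a finite floor area gives finitely many unique states, that there are eight actions, that a transition probability and an immediate reward function exist, and that the episode terminates when the victim is found, yielding a finite decision epoch set. It never explicitly verifies the Markov property of the transitions, nor does it address the apparent history-dependence of the reward $\textrm{RSS}_t-\textrm{RSS}_{t-1}$; both are taken as read. You instead identify these two points as the only nontrivial obligations and discharge them via the location-to-state bijection $\phi$ (which the paper does license, through its assumption that no two grids share an RSS value together with the static environment and $Th=0$): memorylessness follows because the next location is determined by the current location and action alone, and the reward becomes $r=\textrm{RSS}(s')-\textrm{RSS}(s)$, a legitimate $r(s,a,s')$. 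Your route is the more rigorous one and buys an actual verification of the defining property of an MDP rather than an appeal to a component list; the paper's route buys brevity and an explicit accounting of finiteness of states, actions, and decision epochs, which you treat more lightly. Neither approach has a substantive gap, though if you wanted to match the paper's cited definition exactly you would add a sentence noting that the decision epoch set is finite because each episode terminates once the RSS exceeds the $-21$~dBm threshold.
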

\begin{proof}
According to \cite{dqn} and \cite{doubleq}, an MDP has five components: 1) finite states, 2) a finite set of actions, 3) a transition probability, 4) an immediate reward function, and 5) a decision epoch set that can be either finite or infinite. In our proposed algorithm, if the indoor scenario is of finite area, the total number of unique states will also be finite. The total number of allowable actions is eight and the UAV can choose an action by $\epsilon$-greedy method. The reward function is defined as the difference between the RSS value of the current state and previous state and finally, the UAV takes decisions until it finds the victim, which leads to a finite decision epoch. Thus, we can conclude that the proposed indoor navigation framework is an MDP.\looseness=-1 
\end{proof}
\begin{cor}
\textit{The Q-learning algorithm in the proposed RSS-based indoor navigation system will converge to an optimal action-value function with probability one.}
\end{cor}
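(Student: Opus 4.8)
The plan is to reduce the statement to the classical convergence theorem for tabular Q-learning and then check its hypotheses one at a time against the RSS-based scheme. By the preceding Proposition the navigation problem is a finite MDP, so the Bellman optimality operator is a $\gamma$-contraction in the sup-norm and admits a unique fixed point $Q^*$; this $Q^*$ is the ``optimal action-value function'' named in the statement, and it therefore suffices to show that the iterates produced by the update rule~\eqref{eq:q_fn} converge to this $Q^*$ almost surely. I would invoke the stochastic-approximation form of the Q-learning convergence result (due to Watkins and Dayan, with the rigorous treatments of Tsitsiklis and of Jaakkola, Jordan, and Singh), which states that for a finite MDP with bounded rewards the iterates $Q_t$ converge to $Q^*$ with probability one provided that (i) every state--action pair is updated infinitely often, and (ii) the per-pair learning rates $\alpha_t$ satisfy $\sum_t \alpha_t = \infty$ and $\sum_t \alpha_t^2 < \infty$.

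Next I would dispatch the easy structural hypotheses. Finiteness of the state and action sets and existence and uniqueness of $Q^*$ come directly from the Proposition. Bounded rewards follow because the RSS values lie in a bounded range over a finite grid, so each one-step reward $\mathrm{RSS}_t-\mathrm{RSS}_{t-1}$ is bounded, and the terminal bonus is the fixed constant $1000$. The infinitely-often condition (i) follows from the $\epsilon$-greedy rule together with the floor $\epsilon \ge \epsilon_{\min}=0.01>0$: since $\epsilon$ never decays to zero, every admissible action retains probability at least $\epsilon_{\min}/|A|$ of being selected, so in a connected grid and over infinitely many episodes each reachable state--action pair is visited infinitely often. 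I would also note that the obstacle-avoidance step of Algorithm~\ref{alg:Alg2} only removes actions that are infeasible at a given state and hence do not belong to that state's action set in the MDP, so it does not destroy persistent exploration.

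The step I expect to be the real obstacle is hypothesis (ii), the Robbins--Monro condition on the learning rate. The schedule in Algorithm~\ref{alg:Alg2} decays $\alpha$ geometrically by $e^{-\eta}$ but floors it at $\alpha_{\min}=0.05$, so after finitely many steps $\alpha$ is constant. A constant positive step size gives $\sum_t \alpha_t = \infty$ but also $\sum_t \alpha_t^2 = \infty$, so the square-summability half of (ii) fails and the textbook theorem does not apply verbatim; with a fixed floor one can in general only guarantee boundedness of the asymptotic error, i.e. convergence to a neighborhood of $Q^*$, rather than exact almost-sure convergence. To recover the stated ``probability one'' conclusion I would either (a) drop the floor and adopt a genuinely diminishing, per-state--action schedule such as $\alpha_t(s,a)=1/(1+n_t(s,a))$, where $n_t(s,a)$ counts prior visits, which satisfies both halves of (ii), or (b) exploit the static, deterministic-transition setting used here ($Th=0$ with a fixed source) to argue that stochasticity enters only through action selection and reinterpret the claim as convergence of the induced greedy policy to an optimal one. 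The cleanest route is (a): once the constant floor is replaced by such a diminishing schedule, the hypotheses already verified deliver the result directly.
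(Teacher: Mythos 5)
Your proposal follows the same route as the paper's own proof: both reduce the claim to the classical convergence theorem for tabular Q-learning~\cite{watkins} and verify its hypotheses (finite states and actions, $\gamma<1$, bounded rewards, tabular updates, infinitely many updates). The difference is one of rigor rather than strategy. The paper's proof is a one-line checklist that records ``$\alpha\in[0,1]$'' and stops there, whereas you correctly isolate the Robbins--Monro condition $\sum_t\alpha_t=\infty$, $\sum_t\alpha_t^2<\infty$ as the substantive hypothesis and observe that the schedule actually used (exponential decay of $\alpha$ floored at $\alpha_{\rm min}=0.05$) violates square-summability, so the theorem does not apply verbatim. This is a genuine weakness in the paper's argument that your analysis exposes; the paper does not address it. Your two proposed repairs are both sound: replacing the floored schedule with a per-pair diminishing step size $1/(1+n_t(s,a))$ restores the hypotheses exactly, and alternatively, since the environment here is static with deterministic transitions and deterministic rewards ($Th=0$, fixed source), the update \eqref{eq:q_fn} has no noise term and a constant step size in $(0,1]$ already suffices for convergence given persistent exploration --- which is arguably the cleanest justification of the corollary as literally stated. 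In short: same skeleton as the paper, but your version identifies and patches a gap that the paper's proof leaves open.
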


\begin{proof}
In our proposed method, the states and actions are finite and we consider $\gamma$ to be less than one. The reward function is finite and $\alpha \in [0,1]$. All the $Q$-values are updated and stored in tables. Q-tables of both RSS-based and location-based algorithms get an infinite number of updates. Thus we fulfill all the conditions mentioned in~\cite{watkins} for convergence.
\end{proof} 
\subsection{Limitations}
There are a few limitations in our simulation setup which we plan to address in our future research. We assume a stationary indoor environment where the victim is stagnant, which might not always be the case. In fact, in case of emergencies, the victims might switch their locations abruptly and randomly for safety purposes. In addition, frequent sharp turns while traversing will cost the UAV with more battery power. We overlook this non-trivial issue intentionally for the sake of simplicity. We will consider battery constraints and a dynamic environment for the navigation of UAVs in our future research.
\section{Experiments and Results}
\label{sec:Experiments}

 We first investigate the trajectories followed by the UAV in both scenarios using the RSS-based algorithm. We consider a location-based Q-learning algorithm as the baseline, where we assume that the UAV can track its indoor location and the location of the target is known beforehand. Apart from these, the reward is defined as $(1/D_t)$ in the location-based algorithm, where $D_t$ is the Euclidean distance between the UAV and the victim after taking an action at time $t$. In this way, the UAV will try to minimize its distance from the victim through the iterations. Note that, for UAV speeds greater than 1 m/s, the UAV may not land on the exact location of the victim for different starting points. 
 Hence, we consider that an episode ends when $D_t$ is less than 2 m as in the case of the RSS-based Q-learning. The main differences between the two methods are summarized in Table~\ref{Tab:rss_vs_location}.
 
 \begin{figure}[t]
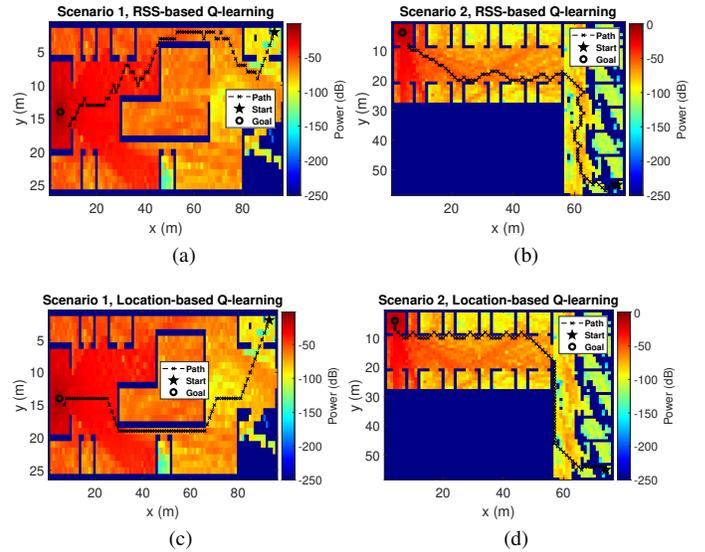
%
 \vspace{-3mm}
    \centering
    \subfloat[]{{\includegraphics[width=0.5\linewidth]{simple_q_trajectv2} }}%
    \subfloat[]{{\includegraphics[width=0.5\linewidth]{complex_q_trajectv2} }}%
    \qquad
    
    \subfloat[]{{\includegraphics[width=0.49\linewidth]{simple_q_location_trajectv2} }}%
    \subfloat[]{{\includegraphics[width=0.49\linewidth]{complex_q_location_trajectv2} }}%
    \caption{UAV trajectories for (a)-(b) RSS-based Q-learning and (c)-(d) location-based Q-learning for the two indoor scenarios. UAV speed is 1 m/s.} 
    \label{fig:traject_indoor}%
\end{figure}
 

\begin{table}[]
 \vspace{-3mm}
\centering
\renewcommand{\arraystretch}{1.3}
\caption {Comparison between the RSS-Based and Location-Based Q-Learning Algorithms.}
\label{Tab:rss_vs_location}
\resizebox{0.49\textwidth}{!}{
\begin{tabular}{cccc}
\hline
Technique & \begin{tabular}[c]{@{}c@{}}State \\ definition\end{tabular} & \begin{tabular}[c]{@{}c@{}}Reward \\ definition\end{tabular} & \begin{tabular}[c]{@{}c@{}}Convergence\\ criteria\end{tabular} \\ \hline
RSS-based & $\textrm{RSS}_t$ & $\textrm{RSS}_t-\textrm{RSS}_{t-1}$ & $\textrm{RSS}_t>-21$ dBm \\ 
Location-based & (x,y) coordinate & $1/D_t$ & $D_t<2$ m \\ \hline
\end{tabular}}
\end{table}

 \begin{figure}[t!]
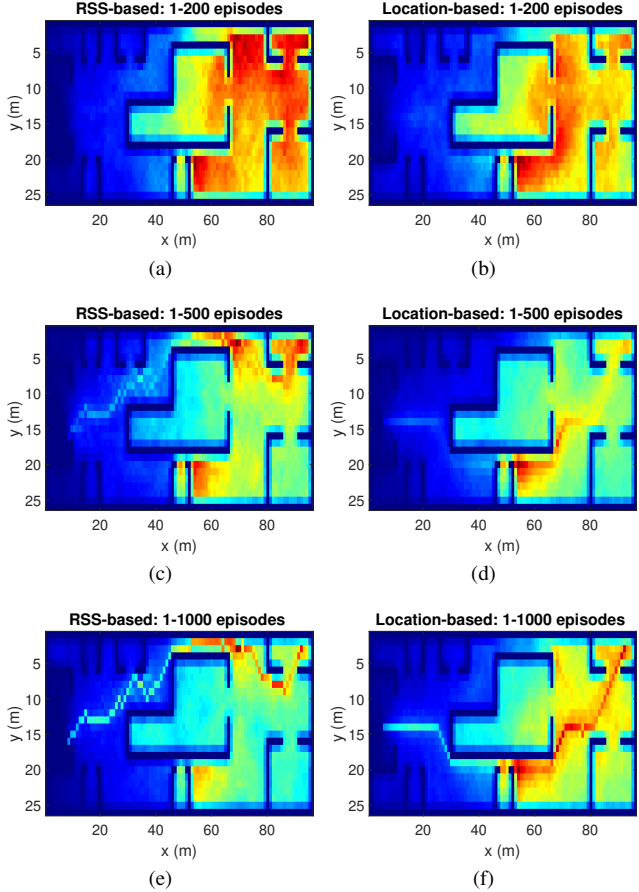
%
\vspace{-4mm}
    \centering
    \subfloat[]{{\includegraphics[width=0.46\linewidth]{state_visit_200_v1_q1v2}}}%
    \hspace{1mm}
    \subfloat[]{{\includegraphics[width=0.46\linewidth]{state_visit_200_v1_location_q1v2}}}%
    \qquad
    \subfloat[]{{\includegraphics[width=0.46\linewidth]{state_visit_500_v1_q1v2}}}%
    \hspace{1mm}
    \subfloat[]{{\includegraphics[width=0.46\linewidth]{state_visit_500_v1_location_q1v2}}}%
    \qquad
    \subfloat[]{{\includegraphics[width=0.46\linewidth]{state_visit_1000_v1_q1v2}}}%
    \hspace{1mm}
    \subfloat[]{{\includegraphics[width=0.46\linewidth]{state_visit_1000_v1_location_q1v2}}}%
    \caption{The heat map of the number of visits of location-based and RSS-based Q-learning algorithms in Scenario 1 for different episode intervals. UAV speed is 1 m/s. Brighter colors indicate more visits.}
    \label{fig:visit_heatmap}%
\end{figure}

\begin{figure}[t]
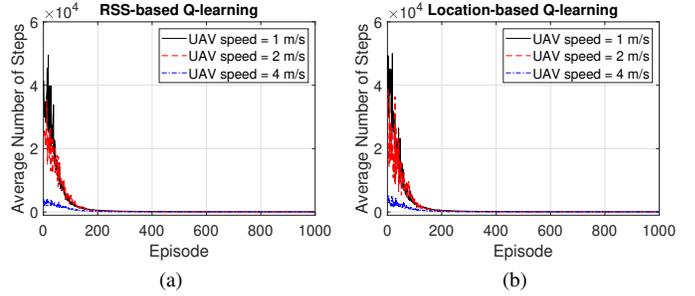
%
    \centering
    \subfloat[]{{\includegraphics[width=0.48\linewidth]{stability_q_speed_scenario1v2}}}%
    \hspace{0.2cm}
        \subfloat[]{{\includegraphics[width=0.48\linewidth]{stability_q_location_speed_scenario1v2}}}%
    \caption{Average number of steps per episode in (a) RSS-based and (b) in location-based Q-learning algorithms for different UAV speeds in  Scenario 1.} 
    \label{steps_episode_q_vs_dq}%
\end{figure}

 The resulting UAV trajectories for the RSS-based algorithm are shown in Fig.~\ref{fig:traject_indoor}(a) and Fig.\ref{fig:traject_indoor}(b). The UAV speed is considered to be 1 m/s. The UAV starts from the initial location (93 m, 2 m) in 
 Scenario~1, and from the location (74 m, 55 m) in Scenario~2. The victim is considered to be situated at (5 m, 14 m) in 
 Scenario~1, and at the location (4 m, 4 m) in Scenario~2. In both scenarios, we observe that the trajectories tend to avoid the regions with low RSS values. Since the reward is defined as the difference between the RSS values at successive states, the UAV shows an inclination to have higher RSS values at the next steps rather than finding the victim with the smallest path. We see the same trend for other simulations with different starting positions. Sensing the paths with higher RSS values eventually leads the UAV towards the victim. Although the UAV does not know the victim's location, it can successfully reach the destination.\looseness=-1
 
The trajectories associated with the location-based Q-learning are shown in Figs.~\ref{fig:traject_indoor}(c) and \ref{fig:traject_indoor}(d) for Scenario 1 and 2. The UAV starting points and target locations are kept the same as those of the RSS-based Q-learning experiments. We observe that the UAV tries to find the shortest path towards the victim in both scenarios as expected. Note that, in Fig.~\ref{fig:traject_indoor}(d), the UAV tends to enter some of the compartments. This is due to the fact that the points inside the compartments are nearer to the victim from any other point in the hallway area. For higher speeds, UAV avoids those points since the overall distance covered by the UAV will be increased otherwise. 
 

To have a better understanding of the learning processes, we investigate the relative frequency of the state visits through the episodes. Heat maps (averaged over 100 runs) in Fig.~\ref{fig:visit_heatmap} show the results for three different episode intervals. Comparing Fig.~\ref{fig:visit_heatmap}(a) and Fig.~\ref{fig:visit_heatmap}(b), we observe that the location-based Q-learning visits nearby locations to the starting point more frequently in the first 200 episodes than its RSS-based counterpart. This is because the RSS-based method tries to find the locations that provide higher signal strength and RSS values at different locations are unique. As a consequence, RSS-based method learns better policies faster. On the other hand, location-based Q-learning focuses on finding the shortest route and two or more locations might have same distances from the target. Hence, location-based method needs more explorations. From Fig.~\ref{fig:visit_heatmap}(c) and Fig.\ref{fig:visit_heatmap}(d), which show the frequency of the state visits in the first 500 episodes, we can also conclude that the RSS-based method finds the optimal policy earlier than the location-based method. 
Lastly, as it is clear from Figs.~\ref{fig:visit_heatmap}(e) and~\ref{fig:visit_heatmap}(f), both methods learn optimal policies during the the first 1000 episodes.

Fig.~\ref{steps_episode_q_vs_dq} shows the average number of steps taken per episode by the UAV to reach its goal for different speeds in Scenario 1. The number of steps required in each episode is averaged over 100 realizations. As expected, the number of steps decreases with the episode index. The UAV learns the representation of the indoor environment better as it becomes more experienced and hence, it requires fewer steps to reach the goal. The UAV can move to fewer states as its speed increases, and thus, the Q-learning algorithms tend to converge  quicker with higher UAV speeds. 
Moreover, we observe that the RSS-based navigation converges within about the same number of episodes as the location-based method. Similar to the observations in Fig.~\ref{fig:visit_heatmap}, since the RSS-based technique only focuses on getting higher RSS values as rewards, it quickly learns to skip the states that provide lower RSS values. Meanwhile, the location-based Q-learning treats every possible state equally and hence ends up with getting higher average steps during the early episodes.

Next, we explore the convergence time of the algorithms for different UAV speeds. 
We record the trajectory followed by the UAV to reach its goal for each episode. If the UAV follows the same path for three consecutive episodes, we conclude that the Q-table is converged. The time elapsed until the convergence of the Q-tables is averaged over 100 executions. The results are shown in Fig.~\ref{complex_time}. 
Similarly to the above results, since the number of allowable actions decreases with the UAV speed, convergence time decreases for both algorithms. We observe that the RSS-based algorithm shows competitive performance in terms of convergence time with the location-based algorithm, especially for higher UAV speeds. 

 \begin{figure}[t]
     \centering
         {\includegraphics[width=0.79\linewidth] {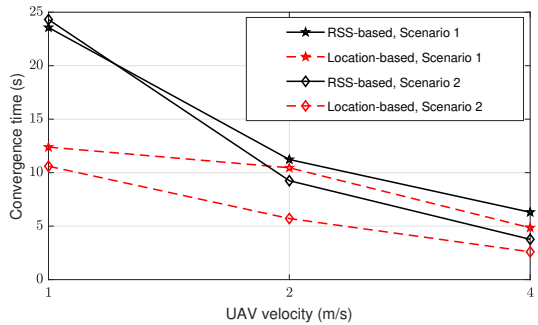}
         }
        \vspace{-6pt}
        \caption{Convergence time of the RSS-based and location-based Q-learning algorithms for different UAV speeds.} 
        \label{complex_time}
        \vspace{-4pt}
\end{figure}

Finally, we provide the total length of the final trajectories in  Table~\ref{Tab:LengthSteps}.
Since Scenario 2 consists of longer hallways and include compartments, the UAV needs to take more steps to reach the goal when compared to Scenario 1. Overall, our proposed technique provides very close results to the location-based algorithm in terms of the number of steps in the final trajectory. However, having even the same number of steps does not always imply having the same computational time or path length. This is due to the fact that diagonal movements take more time than the movements in left-right and up-down paths. Since the location-based algorithm results in more straight trajectories as shown in Fig.~\ref{fig:traject_indoor}, the total final path length and flight time will be smaller than those of the RSS-based algorithm. \looseness=-1

\renewcommand{\arraystretch}{1.2}


\begin{table}[]
\centering
\renewcommand{\arraystretch}{1.3}
\caption {The Total Length of the Final Trajectories of the Q-Learning Algorithms for Different UAV Speeds and Scenarios.}
\label{Tab:LengthSteps}
\resizebox{0.49\textwidth}{!}{
\begin{tabular}{ccccc}
\hline
 & \multicolumn{4}{c}{Total path length (m)} \\ \cline{2-5} 
 UAV speed& \multicolumn{2}{c}{Scenario 1} & \multicolumn{2}{c}{Scenario 2} \\ \cline{2-5} 
 (m/s) & \multicolumn{1}{l}{RSS-based} & \multicolumn{1}{l}{Location-based} & \multicolumn{1}{l}{RSS-based} & \multicolumn{1}{l}{Location-based} \\ \hline
1 & 105.88 & 94.70 & 122.61 & 119.71 \\ 
2 & 104.22 & 101.25 & 121.65 & 121.05 \\ 
4 & 102.91 & 98.63 & 126.22 & 114.63 \\ \hline
\end{tabular}}
\end{table}

\textcolor{red}{}

\vspace{-4mm}

\section{Conclusion}
\label{sec:Conc}
In this paper, we studied the problem of detecting or rescuing a victim in a GPS-denied indoor environment using the 
RSS of the RF signals sent by the victim's smart devices. We envisioned a rescue system by deploying a UAV, which will navigate through the indoor environments using Q-learning techniques. We presented simulation results for two indoor scenarios with different complexities. We also compared our proposed technique with the location-based Q-learning and find that RSS-based Q-learning provides competitive performance without requiring the UAV and target location information. Our results show that the RSS-based Q-learning shows less fluctuations during training than the location-based method. The convergence time decreases with the increasing UAV speed for both methods, and the RSS-based technique learns the environment earlier than its location-based counterpart.\looseness=-1




\vspace{-2mm}
\bibliographystyle{IEEEtran}
\balance
\bibliography{IEEEabrv,references}

\end{document}